\colorlet{MyBlue}{DodgerBlue!60!Black}
\colorlet{MyGreen}{DarkGreen!85!Black}
\definecolor{ngreen}{RGB}{56, 188, 83}
\definecolor{nred}{RGB}{196, 39, 39}
\tikzset{cross/.style={cross out, draw=black, minimum size=2*(#1-\pgflinewidth), inner sep=0pt, outer sep=0pt},cross/.default={1pt}}
\numberwithin{equation}{section}  
\crefname{app}{Appendix}{Appendices}
\newcommand{\debug}[1]{{\color{purple}#1}}
\theoremstyle{plain}
\newtheorem{theorem}{Theorem}
\newtheorem*{corollary*}{Corollary}
\newtheorem{lemma}[theorem]{Lemma}
\newtheorem{proposition}[theorem]{Proposition}
\theoremstyle{definition}
\newtheorem{definition}[theorem]{Definition}
\newtheorem*{definition*}{Definition}
\newtheorem*{hypothesis*}{Hypothesis}
\theoremstyle{remark}
\newtheorem*{remark*}{Remark}
\newtheorem*{notation*}{Notational remark}
\numberwithin{theorem}{section}
\numberwithin{remark}{section}
\numberwithin{example}{section}
\newcommand{\N}{\mathbb{N}}
\newcommand{\PNE}{\mathcal{E}}
\newcommand{\card}[1]{\mathsf{\debug{card}}\parens*{#1}}
\DeclareMathOperator{\Poisson}{\mathsf{\debug{Poisson}}}
\DeclarePairedDelimiter{\braces}{\{}{\}}
\DeclarePairedDelimiter{\bracks}{[}{]}
\DeclarePairedDelimiter{\parens}{(}{)}
\DeclarePairedDelimiter{\floor}{\lfloor}{\rfloor}
\DeclarePairedDelimiterX{\braket}[2]{\langle}{\rangle}{#1,#2}
\DeclarePairedDelimiterX{\inner}[2]{\langle}{\rangle}{#1,#2}
\DeclarePairedDelimiterX{\setdef}[2]{\{}{\}}{#1:#2}
\DeclarePairedDelimiterXPP{\probof}[1]{\prob}{(}{)}{}{%

#1}
\DeclarePairedDelimiterXPP{\exof}[1]{\ex}{[}{]}{}{%

#1}
\newacro{BRD}{best-response dynamics}
\newacro{brd}[bRD]{better-response dynamics}
\newacro{CLT}{central limit theorem}
\newacro{NE}{Nash equilibrium}
\newacro{PNE}{pure Nash equilibrium}
\newacro{SPNE}{strict pure Nash equilibrium}
\newacro{WHP}{with high probability}
\newacro{WVHP}{with very high probability}
\def\cnst{\text{cnst}\,~}
\title{When ``Better'' is better than ``Best''}
\author{Ben Amiet}
\author{Andrea Collevecchio}
\author{Kais Hamza}
\address{School of Mathematics, Monash  University, Melbourne, Australia}
\email{ben.amiet@monash.edu,andrea.collevecchio@monash.edu,kais.hamza@monash.edu}
\subjclass[2010]{Primary: 91A05, secondary: 91A10} 
\keywords{pure Nash equilibrium, random game, best response dynamics}
\begin{document}

\maketitle

\vspace{-.7cm}
\begin{abstract}
    We consider two-player normal form games where each player has the same finite strategy set.
    The payoffs of each player are assumed to be i.i.d.\ random variables with a continuous distribution. 
    We show that, with high probability, the better-response dynamics converge to pure Nash equilibrium whenever there is one, whereas best-response dynamics fails to converge, as it is trapped.
\end{abstract}


\section{Introduction}


\subsection{Background and motivation}
Among the various techniques to find a \ac{PNE} in a normal form game, \ac{BRD} is one of the simplest techniques to describe. 
Starting from some strategy profile, one player, picked at random, chooses the strategy which guarantees them the highest payoff, given the other players' chosen strategies.
The procedure is then repeated, starting from the new profile. 
When no player can choose a strategy that improves their payoff, a \ac{PNE} is reached.
The procedure clearly does not converge in games that do not have a \acp{PNE}. 
Existence of \acp{PNE} does not guarantee convergence however, as the procedure could also fail by cycling indefinitely on a set of profiles. 

An alternative procedure, called \ac{brd}, requires the player who is chosen at random to move to a new strategy that guarantees a higher (but not necessarily maximal) payoff than the present one.
At first sight, it seems self-evident that \ac{BRD} should perform better than \ac{brd}.
To be a {PNE}, a profile must maximize the payoff of each player, given the other players' strategies; therefore, choosing a better response that is not a best response is sub-optimal.
In this paper, we prove that this intuition is in general false.
We consider two-player normal form games where each player has the same finite strategy set.
The payoffs of each player are assumed to be i.i.d.\ random variables with a continuous distribution. 
We show that, with high probability, bRD converges to a PNE whenever one exists, whereas BRD will cycle indefinitely on a subset of strategy profiles.


\subsection{Related work}

The literature about the number of \acp{PNE} in games with random payoffs is quite extensive. 
We refer the reader to \citet{AmiColScaZho:MORfrth} for a list of the main papers on the topic.
In particular \citet{Pow:IJGT1990} proved that, in a normal form game where payoffs are i.i.d.\ with a continuous distribution, when the number of strategies of at least two players goes to infinity, the distribution of the number of \acp{PNE} converges to $\Poisson(1)$.
See also \citet{RinSca:GEB2000}.

The use of  \ac{BRD} to find \acp{PNE} has been studied, among others, by \citet{Blu:GEB1993,You:E1993,FriMez:JET2001,TakYam:EB2002} and \citet{FabJagSch:TCS2013}.
In general, \ac{BRD} does not converge to a \ac{PNE}. 
It is known to converge,  for instance, in potential games, as defined by \citet{MonSha:GEB1996}.
The performance of \ac{BRD} in potential games with a random potential function has been studied in \citet{CouDurGauTou:NetGCoop2014,DurGau:AGT2016} and \citet{DurGarGau:PE2019}.

\citet{GoeMirVet:FOCS2005} defined the concept of \emph{sink equilibria}  (which in this paper are called traps), i.e., sets of strategy profiles where a \ac{BRD} may end up cycling.
\citet{ChrMirSid:TCS2012} studied  the rate of convergence of a \ac{BRD} to approximate solutions of a game;
\citet{DutKes:SODA2017} considered \ac{BRD} in the context of combinatorial auctions.

The \ac{brd} process is a  less extensively studied concept: 
\citet{Ren:E1999,Ren:ET2011,Kuk:JME2018} studied its behavior in infinite games with discontinuous payoffs;
\citet{CabSer:GEB2011} studied the dynamics in the framework of implementation;
\citet{FabJagSch:TCS2013} studied it in the context of weakly acyclic games and show the relation between weak acyclicity and existence of \acp{PNE}.


\section{Notation and Main results}
\label{se:notation}

Given integers $k\le K$, we set $\bracks{k,K} \coloneqq\braces*{k,\dots,K}$ and $\bracks{K} \coloneqq \bracks{1,K}$. 
We consider a two-player game where both players can choose a strategy in $\bracks{K}$; let $\mathcal{S} \coloneqq \bracks{K}^2$.
For a strategy profile $\boldsymbol{s}\coloneqq\parens*{s_{1},s_{2}}\in \mathcal{S}$, the payoff of player $i$ is $Z_{i}^{\boldsymbol{s}}=Z_{i}^{s_{1}, s_{2}}$.
We will assume that the payoffs $Z_{i}^{\boldsymbol{s}}$ are i.i.d. and follow a continuous distribution.

Two strategy profiles $\boldsymbol{s} = (s_1, s_2)$ and $\boldsymbol{t}=(t_1, t_2)$ are neighbors, denoted by $\boldsymbol{s}\sim \boldsymbol{t}$, if $\boldsymbol{s} \neq \boldsymbol{t}$ and  $s_i=t_i$ for exactly one $i \in \{1,2\}$. We use  $\boldsymbol{s}\sim_i \boldsymbol{t}$ to denote that $\boldsymbol{s}\sim \boldsymbol{t}$ and that these strategy profiles have different strategies for player $i$.
A strategy profile $\boldsymbol{s}$ is a \emph{pure Nash equilibrium}  (\ac{PNE}) if $Z_i^{\boldsymbol{s}} \ge Z_i^{\boldsymbol{t}}$ for any $\boldsymbol{s}\sim_i \boldsymbol{t}$, for  $i\in \{1,2\}$.
For $i\in \{1,2\}$ define the sets
$$\mathcal{P}^{(i)}_{\boldsymbol{s}} \coloneqq \{\boldsymbol{t} \in \mathcal{S}\colon \boldsymbol{t} \sim_{i} \boldsymbol{s}, Z_i^{\boldsymbol{t}} \geq Z_i^{\boldsymbol{s}}\} \text{ and } \mathcal{M}^{(i)}_{\boldsymbol{s}} \coloneqq \{\boldsymbol{t} \in \mathcal{P}^{(i)}_{\boldsymbol{s}}\colon Z_i^{\boldsymbol{t}} \geq Z_i^{\boldsymbol{u}} \text{ for all } \boldsymbol{u} \sim_i \boldsymbol{s}\}.$$

\begin{definition}
    \label{de:BRD}
    The \acfi{brd}\acused{brd}, denoted $\mathsf{bRD} = \{\mathsf{bRD}(n)\}_{n=0}^\infty$ with $\mathsf{bRD}(0) = (1,1)$, is a discrete-time process on $\mathcal{S}$ that evolves as follows.
    At time $n+1$, pick a player at random, independently of the current value of the process and its past. 
    Call this random variable $I \in \{1, 2\}$, and choose $\mathsf{bRD}(n+1)$ uniformly at random from the set $\mathcal{P}^{(I)}_{\mathsf{bRD}(n)}$.
    If the latter set is empty, repeat the procedure with the other player.
    If $\mathcal{P}^{(1)}_{\mathsf{bRD}(n)} = \mathcal{P}^{(2)}_{\mathsf{bRD}(n)} = \varnothing$ then set $\mathsf{bRD}(n+1) = \mathsf{bRD}(n)$; at this point, $\mathsf{bRD}$ has reached a \ac{PNE}, and we say that the process has converged.
    
    The \acfi{BRD}\acused{BRD}, denoted $\mathsf{BRD} = \{\mathsf{BRD}(n)\}_{n=0}^\infty$ with $\mathsf{BRD}(0) = (1,1)$, is defined similarly, except it chooses its strategy profile at time $n+1$ from the set $\mathcal{M}^{(I)}_{\mathsf{BRD}(n)}$. Its convergence criterion follows mutatis mutandis.
\end{definition}


Note that because $\mathsf{BRD}$ always moves to a best response for its current strategy profile and because the payoff distribution is continuous, $\mathsf{BRD}$ will always alternate between changing the action of player 1 and 2 until it reaches a PNE (if it does at all).
Both \ac{BRD} and \ac{brd} clearly fail to converge in games that do not admit \acp{PNE}, but they could also keep cycling in games that do have \acp{PNE}. 
Below we define the structures upon which these processes indefinitely cycle.

\begin{definition}
    \label{de:trap}
    Let $\mathsf{X}$ denote either of the two processes $\mathsf{BRD}$ or $\mathsf{bRD}$. A nonempty subset of strategy profiles $\tau\subset \mathcal{S}$ such that $\card{\tau} \geq 2$ is called an \emph{$\mathsf{X}$-trap} if $\{\mathsf{X}(n)\in\tau\} \subset \{\mathsf{X}(n+1)\in\tau\}$ and, for all $\boldsymbol{s}\in\tau$, 
    \begin{equation*}
        \{\mathsf{X}(n)\in\tau\} \subset \{\inf\{k \in (n, \infty)\cap \N \colon \mathsf{X}(k)=\boldsymbol{s}\} < \infty \mbox{ a.s.}\}
    \end{equation*}
or equivalently if $\tau\subset\liminf_k\{\mathsf{X}(k)\}$ and $\mathsf{X}(n+1)\in\tau$ whenever $\mathsf{X}(n)\in\tau$.
\end{definition}

We note that for either $\mathsf{BRD}$ or $\mathsf{bRD}$ to not converge, it must enter a trap. 
Also, by restricting the size of a trap to be at least 2, \ac{PNE} are excluded from this definition.
An easy reasoning will convince the reader that both types of trap must contain at least 4 strategy profiles.
The question that we want to address is with what likelihood do both \ac{BRD} and \ac{brd} converge in a game.
Denote by $\PNE_K$ the collection of PNE in the random game. 

\begin{theorem}\label{th:main1} 
\
    \begin{enumerate} 
        \item 
       $
            \label{eq:P-BRD-not-converge-no-atom}
           \lim_{K \to \infty} \mathbb{P}(\mathsf{BRD} \text{ does not converge}) =1.
       $
        \item 
        $
            \label{eq:P-BRD-not-converge-no-atom1}
            \mathbb{P}(\mathsf{bRD} \text{ converges}) = 1- {\rm e}^{-1} +O\left( 1/K\right).
        $ This implies that  $\mathbb{P}(\mathsf{bRD}$  converges$\;|\; \PNE_K \neq \varnothing  ) = 1 + o(1)$.
    \end{enumerate}
\end{theorem}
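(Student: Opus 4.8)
The plan is to recast both dynamics through the two best-response maps. For each column $j$ put $b_1(j)\coloneqq\arg\max_i Z_1^{i,j}$ and for each row $i$ put $b_2(i)\coloneqq\arg\max_j Z_2^{i,j}$; continuity makes these a.s.\ well defined, and since the $2K^2$ payoffs are i.i.d., $b_1$ and $b_2$ are two \emph{independent, uniformly random} functions $[K]\to[K]$. A profile $(i,j)$ is a PNE exactly when $i=b_1(j)$ and $j=b_2(i)$, so PNE are in bijection with the fixed points of $\phi\coloneqq b_2\circ b_1\colon[K]\to[K]$. Each cell is a PNE with probability $K^{-2}$, two cells sharing a row or a column cannot both be PNE, so $\card{\PNE_K}$ has mean $1$ and, by a Chen--Stein estimate, lies within total variation $O(1/K)$ of $\Poisson(1)$; hence $\mathbb{P}(\PNE_K\neq\varnothing)=1-e^{-1}+O(1/K)$, the benchmark for both claims.

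For part (1), the alternation remark reduces $\mathsf{BRD}$ to iteration of $\phi$: after at most one initial step the column visited follows $c_{t+1}=\phi(c_t)$, and $\mathsf{BRD}$ converges iff this orbit reaches a fixed point of $\phi$ before it first repeats a value. Exposing $b_1,b_2$ lazily along the orbit, each fresh step lands on a fixed point with probability $\approx 1/K$ and collides with the past with probability $\approx t/K$, so by a birthday argument the orbit closes after $\Theta(\sqrt K)$ steps and $\mathbb{P}(\mathsf{BRD}\text{ converges})=O(1/\sqrt K)\to 0$, which is the first claim.

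For part (2) the upper bound is immediate: $\mathsf{bRD}$ can only halt at a PNE, so $\{\mathsf{bRD}\text{ converges}\}\subseteq\{\PNE_K\neq\varnothing\}$ and $\mathbb{P}(\mathsf{bRD}\text{ converges})\le 1-e^{-1}+O(1/K)$. For the matching lower bound, observe that $\mathsf{bRD}$ is a finite Markov chain whose only absorbing states are PNE; if no $\mathsf{bRD}$-trap exists then every recurrent class is a single PNE, and since a game without a PNE must contain a trap, $\{\text{no trap exists}\}\subseteq\{\PNE_K\neq\varnothing\}$ and $\mathsf{bRD}$ then converges a.s. Thus
\[
\mathbb{P}(\mathsf{bRD}\text{ converges})\ \ge\ \mathbb{P}(\text{no trap exists})\ =\ \mathbb{P}(\PNE_K\neq\varnothing)-\mathbb{P}(\PNE_K\neq\varnothing\text{ and a }\mathsf{bRD}\text{-trap exists}),
\]
so it suffices to show the last probability is $O(1/K)$. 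Granting this, the two estimates combine to $\mathbb{P}(\mathsf{bRD}\text{ converges})=1-e^{-1}+O(1/K)$, and the final statement follows by division: $\mathbb{P}(\mathsf{bRD}\text{ converges}\mid \PNE_K\neq\varnothing)=\mathbb{P}(\mathsf{bRD}\text{ converges})/\mathbb{P}(\PNE_K\neq\varnothing)=(1-e^{-1}+O(1/K))/(1-e^{-1}+O(1/K))=1+O(1/K)=1+o(1)$.

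The crux, and the step I expect to be hardest, is the bound $\mathbb{P}(\PNE_K\neq\varnothing\text{ and a }\mathsf{bRD}\text{-trap exists})=O(1/K)$. The structural input is that a $\mathsf{bRD}$-trap $\tau$ is closed under \emph{every} improving move, not merely best responses: writing $R,C$ for the rows and columns it meets, the profiles of $\tau$ in each column $c\in C$ form an up-set (the top-ranked rows for player $1$) and symmetrically in each row, whence $b_1(C)\subseteq R$, $b_2(R)\subseteq C$, and $\phi$ maps $C$ into itself with no fixed point there. One checks that the minimal trap is the $2\times2$ cyclic configuration, whose four rank-one/rank-two conditions cost $O(1/K^4)$ and coexist with a PNE only with probability $O(1/K^4)$. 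The genuine difficulty is that when no PNE exists the recurrent trap is typically large, so a bare first-moment count risks being swamped by the combinatorial entropy of large up-set-closed, PNE-free, strongly connected structures; I would control this with a first/second-moment argument over the pair $(R,C)$ and the up-set profile, exploiting that the trap's and the PNE's defining rank events live on essentially disjoint payoff coordinates and that forcing $\phi$ to carry both a fixed point and a nontrivial fixed-point-free invariant set on disjoint coordinates contributes the extra factor $O(1/K)$.
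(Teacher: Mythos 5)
Your high-level architecture coincides with the paper's own. Part (1) as you sketch it (collision/birthday argument along the alternating orbit of best responses) is essentially the proof in Section 3; the paper just makes it exact via the recursions \cref{eq:recursive-pne,eq:recursive-free} and a Riemann-sum bound, getting $\mathbb{P}(\mathsf{BRD}\text{ converges})\le 1/K+\sqrt{\pi/K}$, which matches your $O(1/\sqrt K)$. Likewise, your reduction of part (2) — upper bound because $\mathsf{bRD}$ can only halt at a PNE, lower bound because absence of traps forces a.s.\ absorption at a PNE, so everything hinges on showing $\mathbb{P}\parens*{\braces*{\mathcal{T}\neq\varnothing}\cap\braces*{\PNE_K\neq\varnothing}}=O(1/K)$ — is exactly the reduction the paper performs just before \cref{th:trap-PNE}, and your division step for the conditional statement is fine.

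The problem is that this last bound is not a technical lemma to be deferred: it \emph{is} part (2). It is \cref{th:trap-PNE}, whose proof occupies Sections 4 and 5 and the Appendix, and your proposal leaves it unproved, offering only a plan. Worse, the plan's mechanism (``the trap's and the PNE's rank events live on essentially disjoint payoff coordinates and contribute an extra factor $O(1/K)$'') is not the mechanism that works. The genuine obstruction, which you correctly sense but do not resolve, is that a bare first-moment count over large traps fails: a candidate trap whose rows and columns are densely occupied has probability factors $1/\binom{K}{c_i}$ that degenerate to order one, so the union bound is swamped regardless of how the coordinates are partitioned; and a second-moment argument is the wrong tool, since you need an \emph{upper} bound on an existence probability, which is intrinsically a first-moment task. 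The paper's actual argument splits at size $K^\alpha$. Small traps ($4\le n\le K^\alpha$) are rare \emph{unconditionally} — the PNE plays no role — via a first-moment count whose summands are tamed by the factorial rearrangement inequalities \cref{pr:comb,co:comb,pr:prod-ratio} and the Stirling estimates of \cref{pr:Kais1}, yielding $O(K^{-3+2\alpha})$. Large traps are where coexistence with a PNE enters, and in a qualitatively different way from your sketch: no profile adjacent to a PNE can lie in a trap, so if some row (or column) carries at least $K^{\alpha/2}$ trap profiles, each of them must strictly beat the profile in the same column sitting in the PNE's row; these comparisons are independent fair coin flips, so this event costs $\parens*{1/2}^{K^{\alpha/2}}$ (\cref{le:P(C1)}) — an exponentially small factor, not a polynomial one. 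The residual case, a large trap all of whose rows and columns are sparse, is again a pure counting bound (\cref{pr:comb} combined with \cref{pr:Kais2}), giving $K^{-\beta K^\alpha}$. Without these three estimates, or substitutes of comparable strength, your argument proves only the easy upper bound $\mathbb{P}(\mathsf{bRD}\text{ converges})\le 1-e^{-1}+O(1/K)$, not the matching lower bound, so the theorem remains open under your proposal.
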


\section{Best Response Dynamics: Proof of \cref{th:main1} (1)}
\label{se:dynamics}

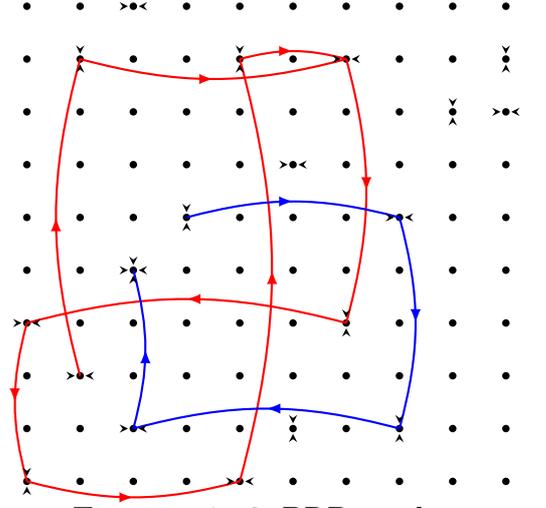
\begin{wrapfigure}{r}{0.42\textwidth}
    \centering
    \begin{tikzpicture}[scale=0.7]
        \foreach \x in {1,...,10}
        \foreach \y in {1,...,10}{
            \fill (\x,\y) circle (2pt);
        }
    	\begin{scope}[thick, red, decoration={markings, mark=at position 0.5 with {\arrow{latex}}}]
    	    \draw[postaction={decorate}] (2,3) to[out=105, in=-105] (2,9);
    	    \draw[postaction={decorate}] (2,9) to[out=-15, in=-165] (7,9);
    	    \draw[postaction={decorate}] (7,9) to[out=-75, in=75] (7,4);
    	    \draw[postaction={decorate}] (7,4) to[out=165, in=15] (1,4);
    	    \draw[postaction={decorate}] (1,4) to[out=-105, in=105] (1,1);
    	    \draw[postaction={decorate}] (1,1) to[out=-15, in=-165] (5,1);
    	    \draw[postaction={decorate}] (5,1) to[out=75, in=-75] (5,9);
    	    \draw[postaction={decorate}] (5,9) to[out=15, in=165] (7,9);
    	\end{scope}
    	\begin{scope}[thick, blue, decoration={markings, mark=at position 0.5 with {\arrow{latex}}}]
    	    \draw[postaction={decorate}] (4,6) to[out=15, in=165] (8,6);
    	    \draw[postaction={decorate}] (8,6) to[out=-75, in=75] (8,2);
    	    \draw[postaction={decorate}] (8,2) to[out=165, in=15] (3,2);
    	    \draw[postaction={decorate}] (3,2) to[out=75, in=-75] (3,5);
    	\end{scope}
        \foreach \x/\y in {5/1, 3/2, 2/3, 1/4, 3/5, 8/6, 6/7, 10/8, 7/9, 3/10}{
            \draw [->, >=stealth] (\x - 0.2,\y) -- (\x - 0.1, \y);
            \draw [->, >=stealth] (\x + 0.2,\y) -- (\x + 0.1, \y);
        }
    	\foreach \x/\y in {1/1, 2/9, 3/5, 4/6, 5/9, 6/2, 7/4, 8/2, 9/8, 10/9}{
            \draw [->, >=stealth] (\x,\y - 0.2) -- (\x, \y - 0.1);
            \draw [->, >=stealth] (\x,\y + 0.2) -- (\x, \y + 0.1);
        }
    \end{tikzpicture}
    \caption{2 BRD paths, one enters a trap (red), the other a PNE (blue)} \label{fi:BRD}
\end{wrapfigure}

For aesthetic purposes, in what follows, we implicitly condition on the event in which $\mathsf{BRD}(0)$ is a best response for exactly one player. If it is instead not a best response for either player, after one step it will arrive at a best response for the strategy which remained constant, and the process behaves as if it were under our implied condition from then on. Finally,  the probability that the starting vertex is a best response for both players -- i.e., a PNE -- is $1/K^2$, and does not influence our final result.

First we note that, because the payoffs follow a continuous distribution, there exists exactly one best response for each action either player chooses. It follows that $\mathsf{BRD}$ is trapped if and only if it visits a row or column it has visited previously. Moreover, $\mathsf{BRD}$ can only visit at most two strategy profiles along any given row or column: the strategy profile from which it enters a row or column; and the corresponding best response. As such, the maximum number of steps $\mathsf{BRD}$ can make before revisiting a row or column is at most $2K-2$. Hence, $\mathsf{BRD}$ is trapped if and only if it has not reached a PNE by this time.

Due to the fact that there is exactly one best response for each action, $\mathsf{BRD}$ must alternate moving along rows or columns at each step (see \cref{fi:BRD} for an example of this behaviour). It follows that, at time $t$, $\mathsf{BRD}$ must avoid $\floor{t/2}$ rows or columns. As each step is to a best response, each new strategy profile that $\mathsf{BRD}$ reaches is a PNE with probability $1/K$. Letting $\mathcal{R}(t)$ be the set of all strategy profiles in the rows and columns that $\mathsf{BRD}$ has visited by time $t$, we have
\begin{equation}
    \label{eq:recursive-pne}
    \mathbb{P}\parens*{\mathsf{BRD}(t+1)\in\PNE_K} = \mathbb{P}\parens*{\mathsf{BRD}(t)\in\PNE_K} + \mathbb{P}\parens*{\mathsf{BRD}(t)\notin\PNE_K\cup\mathcal{R}(t-1)}\frac{K-1-\floor{t/2}}{K-1}\frac{1}{K}.
\end{equation}
In order for $\mathsf{BRD}$ to not be in a trap or a PNE by time $t+1$, it must: not be in a trap or PNE at time $t$; avoid all previously visited rows and columns; and not step to a PNE. It follows that
\begin{equation}
    \label{eq:recursive-free}
    \mathbb{P}\parens*{\mathsf{BRD}(t+1)\notin\PNE_K\cup\mathcal{R}(t)} = \mathbb{P}\parens*{\mathsf{BRD}(t)\notin\PNE_K\cup\mathcal{R}(t-1)}\frac{K-1-\floor{t/2}}{K-1}\frac{K-1}{K}.
\end{equation}
Applying \cref{eq:recursive-pne,eq:recursive-free} repeatedly, we obtain
\begin{equation}
    \label{eq:p-BRD-cv}
    \begin{aligned}
        \mathbb{P}\parens*{\mathsf{BRD} \text{ converges}} &= \mathbb{P}\parens*{\mathsf{BRD}(2K-2)\in\PNE_K}\\
        &= \mathbb{P}\parens*{\mathsf{BRD}(1)\in\PNE_K} + \mathbb{P}\parens*{\mathsf{BRD}(1)\notin\PNE_K\cup\mathcal{R}(0)}\parens*{\frac{1}{K-1}\sum_{t=1}^{2K-3}\prod_{j=1}^t \frac{K-1-\floor{t/2}}{K}}\\
        &= \frac{1}{K} + \frac{1}{K}\sum_{t=1}^{2K-3}\prod_{j=1}^t \frac{K-1-\floor{t/2}}{K}.
    \end{aligned}
\end{equation}

To bound the sum in \cref{eq:p-BRD-cv}, we make use of the inequality $e^x \geq 1 + x$, giving
\begin{equation*}
    \sum_{t=1}^{2K-3} \prod_{j=1}^t \frac{K - 1 - \lfloor j/2\rfloor}{K} \leq \sum_{t=1}^{2K-3} \exp\left(-\frac{1}{K}\sum_{j=3}^{t+2}\lfloor j/2\rfloor\right) \leq 2\sum_{t=1}^{K-2} \exp\left(-\frac{t^2}{K}\right).
\end{equation*}
Finally, we can bound this quantity above by making use of a Riemann sum:
\begin{equation}
    \label{eq:riemann-bd}
    \sqrt{K}\sum_{t=1}^{K-2} \frac{1}{\sqrt{K}}\exp\left(-\frac{t^2}{K}\right) \leq \int_0^\infty e^{-x^2} dx = \frac{\sqrt{K\pi}}{2}.
\end{equation}
This in turn yields
$\displaystyle\mathbb{P}(\mathsf{BRD} \text{ converges}) \leq \frac{1}{K} + \sqrt{\frac{\pi}{K}}.$

\section{Combinatorial bounds}
The proof of \cref{th:main1} (2) relies on a few combinatorial results which we include in this preliminary section. 
For the vector $\boldsymbol{c}=\parens{c_{1},\dots,c_{K}}$, we set $\ell(\boldsymbol{c}) = \sum_ {s=1}^{K}c_{s}$. 
For intuitive purposes, consider $\boldsymbol{c}$ to be the vector describing the number of strategy profiles in each column belonging to a given trap. 

\begin{proposition}\label{pr:comb}
    Let $m\in[K]$. For fixed $\boldsymbol{c}\in[0,m]^K$,
    \begin{equation}
    \prod_{i=1}^K c_i! (K-c_i)! \leq (m!)^{\floor{\ell(\boldsymbol{c})/m}}(K-m)!^{\floor{\ell(\boldsymbol{c})/m}}K!^{K-\floor{\ell(\boldsymbol{c})/m}}.
    \end{equation}
\end{proposition}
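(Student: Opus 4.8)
The plan is to recast the inequality in terms of binomial coefficients and then exploit their log-concavity. Writing $c!(K-c)! = K!/\binom{K}{c}$ and $m!(K-m)! = K!/\binom{K}{m}$, one sees that both sides carry a common factor $(K!)^K$, so the asserted bound is equivalent to the cleaner statement
\[
    \prod_{i=1}^K \binom{K}{c_i} \;\ge\; \binom{K}{m}^{\floor{\ell(\boldsymbol{c})/m}}.
\]
Thus it suffices to bound the product of binomials from below, over all integer vectors $\boldsymbol{c}\in\{0,\dots,m\}^K$ with prescribed sum $\ell = \ell(\boldsymbol{c})$.

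The key tool is the log-concavity of the binomial sequence: the ratio $\binom{K}{c}/\binom{K}{c-1} = (K+1-c)/c$ is strictly decreasing in $c$, so for all integers $1\le a\le b\le K-1$ one has
\[
    \binom{K}{a-1}\binom{K}{b+1} \;\le\; \binom{K}{a}\binom{K}{b}.
\]
In words, replacing two coordinates $(a,b)$ by the more spread-out pair $(a-1,b+1)$ preserves their sum and never increases the product of the corresponding binomials. I would use this as a \emph{spreading} step (the reverse of the usual smoothing) to drive $\boldsymbol{c}$ toward an extremal configuration.

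Concretely, as long as $\boldsymbol{c}$ has two coordinates $c_i\le c_j$ both lying strictly between $0$ and $m$, I replace them by $(c_i-1,c_j+1)$; this keeps every coordinate in $\{0,\dots,m\}$, preserves $\ell$, and weakly decreases $\prod_i\binom{K}{c_i}$. Since each such step strictly increases $\sum_i c_i^2$, which is bounded by $Km^2$, the procedure terminates at a vector $\boldsymbol{c}^\star$ having at most one coordinate strictly inside $(0,m)$. Writing $\ell = pm + r$ with $0\le r<m$, this terminal vector has exactly $p=\floor{\ell/m}$ coordinates equal to $m$, one coordinate equal to $r$, and the rest equal to $0$; therefore
\[
    \prod_{i=1}^K \binom{K}{c_i} \;\ge\; \prod_{i=1}^K \binom{K}{c^\star_i} \;=\; \binom{K}{m}^{p}\binom{K}{r} \;\ge\; \binom{K}{m}^{\floor{\ell/m}},
\]
using $\binom{K}{0}=1$ and $\binom{K}{r}\ge 1$. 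This is exactly the reduced inequality.

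The conceptual content is entirely in the reduction and in recognizing log-concavity as the relevant structure; the only real care needed is the bookkeeping that the spreading step respects the box constraint $0\le c_i\le m$ (which is why the monotone ratio is used only in the exact range $1\le a\le b\le K-1$) and the termination argument. An equivalent and slightly slicker route avoids the explicit iteration: extend $c\mapsto\log\binom{K}{c}$ to a concave function of a real variable via the Gamma function, and note that a concave function on the polytope $\{0\le c_i\le m,\ \sum_i c_i=\ell\}$ attains its minimum at a vertex, every vertex having all but one coordinate in $\{0,m\}$; this yields the same bound without termination bookkeeping, at the cost of invoking log-convexity of $\Gamma$.
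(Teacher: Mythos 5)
Your proof is correct and follows essentially the same route as the paper's: your ``spreading'' step (moving one unit from a smaller coordinate to a larger one) is exactly the paper's weight-shifting device, whose ratio computation $\frac{(c_j+1)(K-c_k+1)}{c_k(K-c_j)}>1$ is precisely the log-concavity inequality you invoke, and both arguments terminate at the same extremal vector of $m$'s, $0$'s, and one remainder coordinate, handled by $\binom{K}{r}\ge 1$ (equivalently $r!(K-r)!\le K!$). The only differences are cosmetic: you divide out $(K!)^K$ to phrase things in binomial coefficients, and you add an explicit termination certificate via $\sum_i c_i^2$, which the paper leaves implicit.
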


\begin{proof}
    Fix indices $j,k$ such that $c_j\ge c_k\ge 1$ and define $\tilde{\boldsymbol{c}}\in[0,m]^K$ by $\tilde{c_j} = c_j + 1$,  $\tilde{c_k} = c_k - 1$, and $\tilde{c_i} = c_i$ for all $i\notin\{j,k\}$.
    We have that
    \begin{equation}
        \prod_{i=1}^K \frac{\tilde{c}_i! (K-\tilde{c}_i)!}{c_i! (K-c_i)!} = \frac{(c_j + 1)(K - (c_k - 1))}{c_k(K - c_j)} > 1.
    \end{equation}
    Hence, moving weight from a column to one with equal or larger weight increases the desired value. 
    If each column has an upper bound of $m$ and a lower bound of 0, then this weight-shifting device leads to an optimized $\tilde{c}$ made up of 0's, $\floor{\ell(\boldsymbol{c})/m}$ $m$'s and possibly one column in $(0,m)$. 
    Since $(K-c)!c! \leq K!$ for all $c\in[K]$, the ``remainder'' column can be bounded by $K!$. The result follows.
\end{proof}

\begin{proposition}\label{co:comb}
    Let $j\in[K]$. For fixed $\boldsymbol{c}\in{[0,K]}^K$ with $\ell(\boldsymbol{c}) < K$ and exactly $j$ nonzero elements,
    \begin{equation}
    \prod_{i=1}^K c_i! (K-c_i)! \leq (\ell(\boldsymbol{c}) - j + 1)!(K - \ell(\boldsymbol{c}) + j - 1)(K-1)!^{j-1}K!^{\ell(\boldsymbol{c})-j}.
    \end{equation}
\end{proposition}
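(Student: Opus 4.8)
The plan is to adapt the weight-shifting device from \cref{pr:comb} to the present setting, the only new feature being that the support of $\boldsymbol{c}$ is now constrained to have fixed size $j$. Write $\ell := \ell(\boldsymbol{c})$. As in \cref{pr:comb}, for indices with $c_j \ge c_k$ I compare $\boldsymbol{c}$ with the vector obtained by replacing $c_j, c_k$ by $c_j+1, c_k-1$; the ratio computed there, $(c_j+1)(K-c_k+1)/\big((K-c_j)c_k\big)$, is strictly greater than $1$ whenever $c_j \ge c_k \ge 1$, so the product $\prod_i c_i!(K-c_i)!$ strictly increases. The crucial difference here is that this move is only \emph{feasible} when the donor column satisfies $c_k \ge 2$: decrementing a column of value $1$ would drop the number of nonzero entries below $j$ and leave the constraint set. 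Hence mass can be concentrated, but the shifting device is obstructed at the value $1$.

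Since the feasible set is finite, a maximizer of $\prod_i c_i!(K-c_i)!$ exists. At such a maximizer there cannot be two columns both $\ge 2$: if $c_a \ge c_b \ge 2$ then shifting from $b$ to $a$ is feasible (it keeps both columns nonzero, so the support size stays $j$) and strictly increases the product, a contradiction. Combined with the requirement of exactly $j$ nonzero entries, the extremal configuration therefore pins $j-1$ of the nonzero columns at the value $1$ and places all remaining mass $\ell-(j-1)=\ell-j+1$ on a single heavy column, with the other $K-j$ columns equal to $0$. Because $\ell<K$ and $j\ge 1$, the heavy column has value $\ell-j+1\le\ell<K$, so the configuration is admissible, and $\ell\ge j$ holds automatically since each of the $j$ nonzero columns is at least $1$.

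It then remains to evaluate the product at this extremal vector and collect the factors: each of the $K-j$ zero columns contributes $0!\,K!$, each of the $j-1$ unit columns contributes $1!\,(K-1)!$, and the single heavy column contributes $(\ell-j+1)!\,(K-\ell+j-1)!$. Multiplying these three groups of contributions yields the claimed upper bound; this is the elementary factorial bookkeeping already licensed by \cref{pr:comb}, together with the bound $c!\,(K-c)!\le K!$ used there for any ``remainder'' column.

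I expect the main obstacle to be the bookkeeping of the support constraint rather than any substantial inequality: one must check at every step that the shifted vector remains inside the feasible polytope (total mass $\ell$ and exactly $j$ nonzero coordinates), and in particular verify that the strict monotonicity of \cref{pr:comb}, valid for $c_j\ge c_k\ge 1$, is only \emph{applied} in the regime $c_k\ge 2$ so that feasibility is preserved. Once the extremal configuration is identified, the remaining computation is purely routine.
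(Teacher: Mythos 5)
Your proof is correct and takes essentially the same approach as the paper's own (very terse) argument: the weight-shifting device of \cref{pr:comb}, obstructed at the value $1$ by the fixed support size $j$, forces the maximizer to consist of $j-1$ unit columns, one heavy column of mass $\ell(\boldsymbol{c})-j+1$, and $K-j$ zero columns; you merely spell out the feasibility bookkeeping that the paper leaves implicit. One remark: what your evaluation actually yields is $(\ell(\boldsymbol{c})-j+1)!\,(K-\ell(\boldsymbol{c})+j-1)!\,(K-1)!^{j-1}\,K!^{K-j}$, which does \emph{not} literally match the displayed statement; the statement contains two typos (the factor $(K-\ell(\boldsymbol{c})+j-1)$ is missing its factorial, and the exponent of $K!$ should be $K-j$ rather than $\ell(\boldsymbol{c})-j$), and as written it is in fact false (e.g.\ $K=2$, $\boldsymbol{c}=(1,0)$ gives left side $2$ and right side $1$). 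The bound you derived is the intended one: it is exactly the form in which \cref{co:comb} is invoked in the proof of \cref{th:trap-PNE}, where the product runs over the $j$ nonzero columns only, i.e.\ after dividing out the zero columns' exact contribution $K!^{K-j}$.
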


\begin{proof}
    As with \cref{pr:comb}, we know that the maximal value for this product is obtained when we can no longer shift weight from a smaller column to a larger column. If $\ell(\boldsymbol{c}) < K$ and $j$ elements of $\boldsymbol{c}$ are nonzero, then this arrangement is obtained when $j-1$ elements of $\boldsymbol{c}$ are 1 and there is a unique element equal to $\ell(\boldsymbol{c}) - (j-1)$.
\end{proof}

\begin{proposition}\label{pr:prod-ratio}
    Fix $m\in[K]$. For any $\boldsymbol{c}\in[0,m]^K$ with $\ell(\boldsymbol{c}) < K$,
    \begin{equation}
        \prod_{i=1}^K \frac{\binom{m}{c_i}}{\binom{K}{c_i}} \leq \left(\frac{m}{K}\right)^{\ell(\boldsymbol{c})}.
    \end{equation}
\end{proposition}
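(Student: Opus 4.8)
The plan is to reduce the claimed inequality to a factor-by-factor comparison after rewriting each binomial ratio in a telescoping product form. First I would use $0 \le c_i \le m \le K$ to write
\begin{equation*}
    \frac{\binom{m}{c_i}}{\binom{K}{c_i}} = \frac{m!\,(K-c_i)!}{(m-c_i)!\,K!} = \prod_{j=0}^{c_i-1}\frac{m-j}{K-j},
\end{equation*}
where the cancellation of the common factor $c_i!$ and the resulting product over $j$ are legitimate precisely because $c_i \le m$, so every numerator and denominator appearing is a positive integer.

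The key step is the elementary observation that, since $m \le K$, for every integer $j \ge 0$ one has
\begin{equation*}
    \frac{m-j}{K-j} \le \frac{m}{K}.
\end{equation*}
Indeed, clearing denominators, this is equivalent to $K(m-j) \le m(K-j)$, i.e.\ $mj \le Kj$, which holds for all $j \ge 0$ because $m \le K$. Applying this bound to each of the $c_i$ factors in the product representation above yields $\binom{m}{c_i}/\binom{K}{c_i} \le (m/K)^{c_i}$.

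Finally I would take the product over $i \in [K]$ and use $\sum_{i=1}^K c_i = \ell(\boldsymbol{c})$ to conclude
\begin{equation*}
    \prod_{i=1}^K \frac{\binom{m}{c_i}}{\binom{K}{c_i}} \le \prod_{i=1}^K \left(\frac{m}{K}\right)^{c_i} = \left(\frac{m}{K}\right)^{\ell(\boldsymbol{c})}.
\end{equation*}
I expect no genuine obstacle here: unlike \cref{pr:comb,co:comb}, this bound does not require the weight-shifting optimization argument but follows from a direct term-by-term estimate. The only point requiring a little care is verifying the validity of the product representation, which is immediate from $c_i \le m$. It is worth noting that the hypothesis $\ell(\boldsymbol{c}) < K$ plays no role in this argument and is presumably carried over from the context in which the proposition is later invoked.
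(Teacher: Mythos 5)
Your proof is correct, but it takes a genuinely different route from the paper's. The paper proves this proposition with the same weight-shifting device used for \cref{pr:comb,co:comb}: fixing indices $j,k$ with $c_j \ge c_k \ge 1$ and moving one unit of weight from $c_k$ to $c_j$, it computes that the product of ratios strictly decreases, so the maximum over configurations with fixed $\ell(\boldsymbol{c})$ is attained at the most spread-out arrangement; since $\ell(\boldsymbol{c}) < K$, that arrangement consists of $\ell(\boldsymbol{c})$ entries equal to $1$ and the rest $0$, for which the product is exactly $\left(m/K\right)^{\ell(\boldsymbol{c})}$. Your argument bypasses this extremal analysis entirely: writing $\binom{m}{c_i}/\binom{K}{c_i} = \prod_{j=0}^{c_i-1}(m-j)/(K-j)$ and bounding each factor by $m/K$ (valid since $m \le K$) gives the factor-wise bound $(m/K)^{c_i}$, and multiplying over $i$ finishes. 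Your approach is more elementary and strictly more general --- as you correctly observe, it never uses $\ell(\boldsymbol{c}) < K$, whereas the paper's proof genuinely needs that hypothesis to realize the extremal configuration of all ones and zeros. What the paper's route buys in exchange is methodological uniformity with the two preceding propositions and the identification of the exact maximizer, which shows the stated bound is sharp (it is attained, not merely an upper estimate); your telescoping bound is loose whenever some $c_i \ge 2$.
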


\begin{proof}
    Fix indices $j, k$ such that $c_j \geq c_k\geq 1$, and define $\tilde{\boldsymbol{c}}\in[0,m]^K$ as in the proof of \cref{pr:comb}. We have that
    \begin{equation}
        \begin{aligned}
            \prod_{i=1}^K \frac{\binom{m}{\tilde{c}_i}\binom{K}{c_i}}{\binom{K}{\tilde{c}_i}\binom{m}{c_i}} &= \frac{(K - \tilde{c}_j)!(K - \tilde{c}_k)!(m - c_j)!(m - c_k)!}{(m - \tilde{c}_j)!(m - \tilde{c}_k)!(K - c_j)!(K - c_k)!}
            = \frac{(m - c_j)(K - c_k + 1)}{(m - c_k + 1)(K - c_j)}.
        \end{aligned}
    \end{equation}
    As the function $x/(x+n)$ increases in $x$ for any $n>0$, the ratio above is less than 1. 
    Hence, moving weight from one column to another with equal or larger weight decreases the product in question, and the maximum value is obtained when all columns have equal weighting. As $\ell(\boldsymbol{c}) < K$, this means that $\ell(\boldsymbol{c})$ elements of $\boldsymbol{c}$ have a value of 1, and all other elements are 0.
\end{proof}

\section{Better response dynamics: proof of \cref{th:main1} (2)}

While there are clear-cut conditions that indicate when a BRD process has entered a trap, that luxury unfortunately does not extend to the domain of bRD processes. For $\mathsf{bRD}$ to determine that it has entered a trap based solely on its past, it must exhaust all movement options from every strategy profile in the suspected trap. This definition of a trap is unwieldy at best, so we require a different approach to address the question of convergence for $\mathsf{bRD}$.

For a $\mathsf{bRD}$-trap $\tau$, denote by $\boldsymbol{R}(\tau)$ (resp. $\boldsymbol{C}(\tau)$) the $K$-dimensional vector whose $j$-th entry is the number of strategy profiles in $\tau$ for which the first player (resp. second player) chooses strategy $j$.
The length of a trap is the number of strategy profiles that it contains. 
We denote by $\mathcal{T}_n$ the collection of $\mathsf{bRD}$-traps of length $n$; further, let $\mathcal{T} = \bigcup_{n=4}^{K^2} \mathcal{T}_n$.
For strategy $i\in [K]$, let 
\begin{equation*}
    M_{i,u} :=\{k\colon Z^{i, k}_2 \text{ is among the largest $u$ payoffs in row $i$}\},
\end{equation*}
and define $\Delta^{\sigma}_1 \coloneqq  \bigcup_{i,j} \bigcap_{k\in M_{i,K^\sigma}}\{Z^{i,k}_2 > Z^{j,k}_2\}$.

\begin{lemma}
\label{le:P(C1)}
Fix a parameter $\sigma\in(0,1)$. We have
$\mathbb{P}(\Delta^{\sigma}_1) \le K^{2} \parens*{\frac{1}{2}}^{K^{\sigma}}.$
\end{lemma}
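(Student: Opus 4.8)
The plan is to reduce the bound to a single row-pair by a union bound, to collapse each resulting probability to an expected product of uniform order statistics by conditioning on one row, and then to control that product by comparing it against a product over a fixed set of columns.

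First I would union-bound over the pairs $(i,j)$ appearing in the definition of $\Delta^\sigma_1$; the diagonal terms are vacuous, so at most $K(K-1)\le K^2$ pairs contribute and it suffices to show $\mathbb{P}\bigl(\bigcap_{k\in M_{i,K^\sigma}}\{Z^{j,k}_2>Z^{i,k}_2\}\bigr)\le(1/2)^{K^\sigma}$ for each fixed $i\neq j$. To deal with the fact that the index set $M_{i,K^\sigma}$ is random and, worse, correlated with the very payoffs being compared, I would condition on the whole of row $i$, i.e.\ on $(Z^{i,1}_2,\dots,Z^{i,K}_2)$. Once row $i$ is revealed, $M_{i,K^\sigma}$ is deterministic, and each event $\{Z^{j,k}_2>Z^{i,k}_2\}$ depends only on the entry $Z^{j,k}_2$ of the independent row $j$; these are mutually independent across $k$. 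Hence the conditional probability of the intersection factorises, and each single-column factor equals $\mathbb{P}(Z^{j,k}_2>Z^{i,k}_2\mid\text{row }i)=1-F(Z^{i,k}_2)$, with $F$ the common continuous CDF.

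Next I would apply the probability integral transform: since $F$ is continuous, $V_k:=1-F(Z^{i,k}_2)$ are i.i.d.\ uniform on $[0,1]$, and because $M_{i,K^\sigma}$ selects the columns carrying the $K^\sigma$ largest payoffs $Z^{i,k}_2$, it selects precisely the $K^\sigma$ smallest of the $V_k$. Taking the outer expectation, the per-pair probability becomes $\mathbb{E}\bigl[\prod_{r=1}^{K^\sigma}V_{(r)}\bigr]$, where $V_{(1)}\le\dots\le V_{(K)}$ denote the order statistics of $V_1,\dots,V_K$.

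The crux, which I expect to be the main obstacle, is to bound this expected product of the smallest $K^\sigma$ order statistics without being able to invoke independence directly. My plan is a pointwise domination: for any fixed $A\subseteq[K]$ with $|A|=K^\sigma$,
\[
\prod_{r=1}^{K^\sigma}V_{(r)}\le\prod_{k\in A}V_k,
\]
because the $r$-th smallest value of the full sample never exceeds the $r$-th smallest value of any subsample, so the product over the $K^\sigma$ globally smallest values is the least among all $K^\sigma$-subset products. Fixing one such $A$ and taking expectations now decouples the factors, giving $\mathbb{E}\bigl[\prod_{k\in A}V_k\bigr]=(1/2)^{K^\sigma}$ from independence and $\mathbb{E}[V_k]=1/2$. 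Multiplying by the $K^2$ pairs from the union bound then yields $\mathbb{P}(\Delta^\sigma_1)\le K^2(1/2)^{K^\sigma}$, as claimed.
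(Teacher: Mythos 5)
Your argument is internally consistent, but it proves a different statement from the one in the paper: in your reduction step you silently reversed the inequality. The paper's $\Delta^{\sigma}_1$ is built from the events $\{Z^{i,k}_2 > Z^{j,k}_2\}$ --- row $i$ beating row $j$ at row $i$'s \emph{own} top-$K^\sigma$ columns --- whereas you bound $\mathbb{P}\bigl(\bigcap_{k\in M_{i,K^\sigma}}\{Z^{j,k}_2 > Z^{i,k}_2\}\bigr)$. The direction matters enormously. Running your own conditioning argument on the paper's event, the conditional probability given row $i$ is $\prod_{k\in M_{i,K^\sigma}}F(Z^{i,k}_2)$, i.e.\ the product of the \emph{largest} $u\coloneqq K^{\sigma}$ order statistics $U_{(K-u+1)},\dots,U_{(K)}$ of $K$ i.i.d.\ uniforms, and the pointwise domination now runs the wrong way: since each factor is at least $U_{(K-u+1)}$, Jensen's inequality gives $\mathbb{E}\bigl[\prod_{r=K-u+1}^{K}U_{(r)}\bigr]\ge \mathbb{E}\bigl[U_{(K-u+1)}^{u}\bigr]\ge\bigl(\tfrac{K-u+1}{K+1}\bigr)^{u}\ge(1-u/K)^{u}=\exp\bigl(-O(K^{2\sigma-1})\bigr)$. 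Because $K^{2\sigma-1}=o(K^{\sigma})$, this single pair's probability eventually dwarfs the claimed bound $K^{2}(1/2)^{K^{\sigma}}$ (for $\sigma<1/2$ it even tends to $1$), and since $\Delta^{\sigma}_1$ contains each pair event, the lemma as literally written is \emph{false} for large $K$; no repair of your domination step, or of any other argument, can establish it.

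The resolution is that the definition of $\Delta^{\sigma}_1$ carries a typo, and you repaired it differently from what the authors intend. In the application (event $A_1$ in the proof of \cref{th:trap-PNE}), the columns $M_{i,K^\sigma}$ are selected by \emph{player 2's} payoffs --- a trap's profiles within a row are exactly that row's top player-2 payoffs, by closure under better responses --- while the comparison with a row containing a PNE involves \emph{player 1's} payoffs, via the observation that a trap profile must beat any neighbouring profile that itself neighbours a PNE. The intended per-pair event is therefore $\bigcap_{k\in M_{i,K^\sigma}}\{Z^{i,k}_1 > Z^{j,k}_1\}$: the random index set is measurable with respect to payoffs that are independent of those being compared, so conditionally on $M_{i,K^\sigma}$ the comparisons are i.i.d.\ Bernoulli$(1/2)$, the per-pair probability is exactly $(1/2)^{K^{\sigma}}$, and the union bound finishes --- that is the paper's one-line proof, and it is correct once the subscript is fixed. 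To your credit, you identified precisely the correlation between $M_{i,K^\sigma}$ and the compared payoffs that makes the literal statement break down, and your order-statistics bound $\prod_{r\le u}V_{(r)}\le\prod_{k\in A}V_{k}$ is a correct and elegant way to handle the reversed same-player variant you chose; but that variant is neither the stated lemma nor the cross-player inequality that the proof of \cref{th:trap-PNE} actually needs.
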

\begin{proof}
Using a union bound it is enough to compare two fixed distinct rows, say $i$ and $j$, and this gives rise to the $K^2$ factor. For the remaining part of the upper bound,  notice that for fixed $k\in [K]$, the events $\{Z^{i,k}_2 > Z^{j,k}_2\}$, for $k \in M_{i,K^\sigma}$, are independent and share the same probability of $1/2$. 
\end{proof}

The nonexistence of traps implies that $\mathsf{bRD}$ will converge. Letting $J_K := \braces{\PNE_K\neq\varnothing}$,
\begin{equation*}
    \mathbb{P}(\mathsf{bRD}\text{ does not converge}) \leq \mathbb{P}\parens*{\braces*{\mathcal{T}\neq\varnothing} \cap J_K} + \mathbb{P}\parens*{{J_K}^c}.
\end{equation*}
Owing to the previously mentioned Poisson result from \cite{Pow:IJGT1990}, to prove \cref{th:main1} 2), it suffices to prove the following theorem.

\begin{theorem}
    \label{th:trap-PNE}
    For any $\alpha \in (0,1)$, we have  that there exists a positive constant, denoted by $\cnst$\hspace{-.1cm}, such that for all large enough $K$,
    \begin{equation}
        \label{eq:trap-PNE}
        \mathbb{P}\parens*{\braces*{\mathcal{T}\neq\varnothing} \cap J_K}
        \le  \cnst K^{-3+2\alpha} + K^{2} \parens*{\frac{1}{2}}^{K^{\alpha/2}} +   K^{-K^\alpha }.
    \end{equation}
\end{theorem}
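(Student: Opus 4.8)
The plan is to bound $\mathbb{P}(\{\mathcal{T}\neq\varnothing\}\cap J_K)$ by a first‑moment estimate over all candidate $\mathsf{bRD}$‑traps, using \cref{pr:comb,co:comb,pr:prod-ratio} to control the expected number of traps of each shape and \cref{le:P(C1)} to discard one low‑probability ``domination'' event. The first step is to record the combinatorial signature of a trap. Because the payoffs are continuous, within each row $i$ the values $Z_2^{i,\cdot}$ are strictly ordered, and within each column $k$ the values $Z_1^{\cdot,k}$ are strictly ordered. A trap must be \emph{closed} under better responses, which forces, in row $i$, the cells of $\tau$ to be exactly the $\boldsymbol{R}(\tau)_i$ largest player‑2 payoffs, and, in column $k$, the cells of $\tau$ to be the $\boldsymbol{C}(\tau)_k$ largest player‑1 payoffs. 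Since the $K$ payoffs in a row (resp.\ column) are exchangeable, a prescribed subset of size $\boldsymbol{R}(\tau)_i$ (resp.\ $\boldsymbol{C}(\tau)_k$) is the top set with probability $1/\binom{K}{\boldsymbol{R}(\tau)_i}$ (resp.\ $1/\binom{K}{\boldsymbol{C}(\tau)_k}$), and these events are independent across rows, across columns, and between the two players. Hence for fixed $\tau$,
\[
\mathbb{P}(\tau\text{ is closed}) = \prod_{i=1}^K \frac{1}{\binom{K}{\boldsymbol{R}(\tau)_i}}\prod_{k=1}^K \frac{1}{\binom{K}{\boldsymbol{C}(\tau)_k}} = \frac{\prod_i \boldsymbol{R}(\tau)_i!\,(K-\boldsymbol{R}(\tau)_i)!}{(K!)^K}\cdot\frac{\prod_k \boldsymbol{C}(\tau)_k!\,(K-\boldsymbol{C}(\tau)_k)!}{(K!)^K},
\]
which is exactly the quantity the three propositions are designed to estimate.

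Next I would run the union bound $\mathbb{P}(\{\mathcal{T}\neq\varnothing\}\cap J_K)\le \sum_{n\ge4}\sum_{\tau\in\mathcal{T}_n}\mathbb{P}(\tau\text{ closed},\,J_K)$, grouping traps by their column‑count vector $\boldsymbol{c}=\boldsymbol{C}(\tau)$. For each shape the expected count factorizes into the number of realizations with counts $\boldsymbol{c}$, bounded by products of binomials $\prod_k\binom{\,\cdot\,}{c_k}$, times the closure probability above. \Cref{pr:prod-ratio} is precisely the device that pairs each configuration factor $\binom{m}{c_k}$ with its closure factor $1/\binom{K}{c_k}$ to produce the geometric decay $(m/K)^{\ell(\boldsymbol{c})}$; it is therefore essential that the sum be organized so that each configuration factor is matched with a closure factor \emph{before} summing.

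The argument then splits into three regimes matching the three terms of \cref{eq:trap-PNE}. (a) I first discard the domination event $\Delta_1^{\alpha/2}$, together with its mirror image $\Delta_2^{\alpha/2}$ obtained by exchanging the two players; by \cref{le:P(C1)} its probability is at most $K^2(1/2)^{K^{\alpha/2}}$, the middle term. On its complement no row's top‑$K^{\alpha/2}$ block is beaten wholesale by another row, which is the structural input that lets the combinatorial estimates control the trap's counts. (b) Traps possessing a row or column with at least $K^\alpha$ cells are handled by \cref{pr:comb} with $m=K^\alpha$: the closure probability then carries a factor $\big(1/\binom{K}{K^\alpha}\big)^{\lfloor \ell/K^\alpha\rfloor}$, super‑polynomially small and dominating the configuration count, which yields $K^{-K^\alpha}$. (c) On the good event all surviving counts lie below $K^\alpha$; moreover a trap cannot meet the row or the column of a PNE, since the best‑response cell at the PNE would then be an absorbing state inside a supposedly recurrent set, so under $J_K$ the relevant part of the trap's support is confined to a set of size $O(K^\alpha)$. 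Feeding this confined, small‑support regime into \cref{co:comb,pr:prod-ratio}, the first moment is dominated by the smallest ($2\times2$) traps and evaluates to $O(K^{-3+2\alpha})$, the extra factor $K^{2\alpha}$ arising from the $O(K^\alpha)$‑confinement of two of the support choices. Summing the three regimes gives \cref{eq:trap-PNE}.

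The main obstacle is the structural step underlying (a) and (c): converting the order‑theoretic facts (closure forces top‑sets; a PNE blocks its own row and column; recurrence precludes returning to a fully dominated row) into a genuine combinatorial confinement that makes the union‑bound sum polynomially small. Two points need care. First, closure is only \emph{necessary} for being a trap, and the full set $\mathcal{S}$ is trivially closed; the union bound must therefore range over \emph{minimal} closed recurrent sets (the bottom strongly connected components of the better‑response graph that are not PNE), and it is exactly minimality together with $J_K\cap(\Delta_1^{\alpha/2})^c$ that eliminates the abundant large traps responsible for the $\Theta(1)$ trapping probability in the unconditioned problem. Second, the counting of realizations with prescribed margins must keep each binomial configuration factor alongside its closure factor so that \cref{pr:prod-ratio} applies; a naive bound that counts supports first and multiplies by the closure probability afterwards loses this cancellation and fails to converge. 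Once the confinement is established, assembling the three regimes is routine.
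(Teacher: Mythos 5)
Your skeleton is broadly the right one and overlaps with the paper's: a first-moment bound over closed sets using the closure probability $\prod_i\binom{K}{R_i(\mathcal{V})}^{-1}\binom{K}{C_i(\mathcal{V})}^{-1}$, a three-way split matching the three terms of \cref{eq:trap-PNE}, and the identification of the middle term $K^2(1/2)^{K^{\alpha/2}}$ with \cref{le:P(C1)} applied jointly with $J_K$ (cells of a trap must beat, coordinate by coordinate, the corresponding cells in the row/column of a PNE). However, two of your three regimes fail as described. Regime (b) misapplies \cref{pr:comb}: that proposition requires \emph{every} entry of $\boldsymbol{c}$ to be at most $m$, so it is the tool for \emph{spread-out} traps, not for traps ``possessing a row or column with at least $K^\alpha$ cells.'' Worse, no pure counting argument can handle dense closed sets at all: a first-moment bound over sets with a dense row includes $\mathcal{V}=\mathcal{S}$, which is closed with probability $1$ (and its closure factors $\binom{K}{K}^{-1}$ are all $1$), so the sum you are bounding is at least $1$. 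This is precisely why dense traps can only be excluded \emph{in conjunction with} $J_K$ via the domination argument --- which your regime (a) sets up at threshold $K^{\alpha/2}$ and then never uses. The paper's actual split of the large traps (total size $>K^\alpha$) is: $A_1$, some row or column of the trap contains $\ge K^{\alpha/2}$ cells, killed by $J_K$ together with \cref{le:P(C1)}; and $A_2$, all row/column counts are $<K^{\alpha/2}$, killed by \cref{pr:comb} with $m=\lfloor K^{\alpha/2}\rfloor$, which yields the factor $\binom{K}{\lfloor K^{\alpha/2}\rfloor}^{-\lfloor n/\lfloor K^{\alpha/2}\rfloor\rfloor}$ per row-count vector and hence the $K^{-K^\alpha}$-type term. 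Your (b) has the roles of ``dense'' and ``spread-out'' exactly backwards.

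Regime (c) rests on a confinement claim that is false. Under $J_K$ a trap must indeed avoid the row and the column of every PNE, but the number of PNE is $O(1)$ in distribution (Poisson$(1)$), so this removes only $O(1)$ rows and columns; nothing confines the trap's support to a set of size $O(K^\alpha)$, and the ``extra factor $K^{2\alpha}$ from confinement of two support choices'' has no basis. The true source of $K^{-3+2\alpha}$ is the union bound over traps of \emph{total size} $n\le K^\alpha$, with no use of $J_K$ whatsoever: grouping by $n$ and $j=\card{\boldsymbol{\Gamma}(\mathcal{V})}$, pairing each configuration factor $\binom{j}{r_i}$ with its closure factor $\binom{K}{r_i}^{-1}$ via \cref{pr:prod-ratio}, and applying \cref{co:comb} to the column counts, each $(n,j)$ term is bounded by a constant times $K^{-3}$ (the Appendix computation, \cref{pr:Kais1}), and there are at most $K^{2\alpha}$ pairs $(n,j)$. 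Two smaller remarks: your instinct to keep binomial configuration factors paired with closure factors before summing is correct and is exactly how the paper's small-trap computation is organized; but your final claim that one must restrict the union bound to \emph{minimal} closed recurrent sets is not needed and is not what the paper does --- smallness, spread-out-ness, and ($J_K$ + domination) respectively dispose of the three regimes without ever invoking minimality, which in any case is not an event a first-moment bound can see.
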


\begin{proof}
In what follows, $cnst$ denotes a generic constant that may change from line to line. 
    We have that
    \begin{equation}
        \label{eq:t-pne}
        \mathbb{P}\parens*{\braces*{\bigcup_{n=4}^{K^2} \mathcal{T}_n \neq\varnothing} \cap J_K}
        \leq \mathbb{P}\parens*{\bigcup_{n=4}^{\floor{K^\alpha}}\mathcal{T}_n \neq\varnothing} + \mathbb{P}\parens*{\braces*{\bigcup_{n=\floor{K^\alpha} + 1}^{K^2} \mathcal{T}_n \neq\varnothing} \cap J_K}.
    \end{equation}
    For a given subset of strategy profiles $\mathcal{V}$, let $\boldsymbol{\Gamma}(\mathcal{V})$ be the projection of $\mathcal{V}$ on the second coordinate. Observe that for $\mathcal{V}$ to be a trap, in every row and column intersecting with $\mathcal{V}$, there must not exist strategy profiles that are better responses than those in $\mathcal{V}$. If there are $v$ vertices of $\mathcal{V}$ in a given row or column, the probability that these $v$ vertices attain the highest $v$ payoffs is $\binom{K}{v}$. Moreover, this event is independent from the payoffs in every other row and column. Hence, if $\mathcal{V}$ has length $n$, then
    \begin{equation*}
        \mathbb{P}(\mathcal{V}\in\mathcal{T}_n) \leq \left(\prod_{i=1}^K \binom{K}{R_i(\mathcal{V})}\binom{K}{C_i(\mathcal{V})}\right)^{-1}. 
    \end{equation*}
    The inequality arises from the fact that certain orderings may not produce a trap, e.g., a row and column may have their best responses on the same strategy profile, resulting in a PNE.
    
    To bound the first term in \cref{eq:t-pne}, we apply a union bound:
    \begin{equation}
        \begin{aligned}
            \mathbb{P}\parens*{\bigcup_{n=4}^{\floor{K^\alpha}}\mathcal{T}_n \neq\varnothing} &\leq \sum_{n=4}^{\floor{K^\alpha}}\sum_{j=2}^{n-2} \mathbb{P}\parens*{\exists \mathcal{V}\subset[K]^2: \mathcal{V}\in\mathcal{T}_n, \card{\boldsymbol{\Gamma}(\mathcal{V})} = j}\\
            &\leq \sum_{n=4}^{\floor{K^\alpha}}\sum_{j=2}^{n-2} \binom{K}{j}\sum_{\substack{\boldsymbol{r}\in{[0,K]}^K:\\\ell(\boldsymbol{r})=n,\max(r_i)\leq j}} \sum_{\substack{\mathcal{V}\subset[j]^K:\\\boldsymbol{R}(\mathcal{V}) = \boldsymbol{r},\boldsymbol{\Gamma}(\mathcal{V})=[j]}}\frac{1}{\prod_{i=1}^K\binom{K}{R_i(\mathcal{V})}\binom{K}{C_i(\mathcal{V})}}\\
            &\leq \sum_{n=4}^{\floor{K^\alpha}}\sum_{j=2}^{n-2} \binom{K}{j}\sum_{\substack{\boldsymbol{r}\in{[0,K]}^K:\\\ell(\boldsymbol{r})=n,\max(r_i)\leq j}} \frac{1}{\prod_{i=1}^K\binom{K}{r_i}} \sum_{\substack{\mathcal{V}\subset[j]^K:\\\boldsymbol{R}(\mathcal{V}) = \boldsymbol{r},\boldsymbol{\Gamma}(\mathcal{V})=[j]}}\frac{1}{\prod_{i=1}^j\binom{K}{C_i(\mathcal{V})}}.
        \end{aligned}
    \end{equation}
    We use \cref{co:comb} to obtain the upper bound
    \begin{equation}
        \prod_{i=1}^j (K-C_i(\mathcal{V}))!C_i(\mathcal{V})! \leq (\ell(\boldsymbol{C}(\mathcal{V})) - j + 1)!(K - \ell(\boldsymbol{C}(\mathcal{V})) + j - 1)!(K-1)!^{j-1}.
    \end{equation}
    Moreover, for a given vector $\boldsymbol{r}$ with each element being at most $j$, there are $\prod_{i=1}^K \binom{j}{r_i}$ sets of strategy profiles $\mathcal{V}$ satisfying $\boldsymbol{R}(\mathcal{V}) = \boldsymbol{r}$. This gives
    \begin{equation}
        \mathbb{P}\parens*{\bigcup_{n=1}^{\floor{K^\alpha}}\mathcal{T}_n \neq\varnothing} \leq \sum_{n=4}^{\floor{K^\alpha}}\sum_{j=2}^{n-2} \frac{(K-n+j-1)!(n-j+1)!}{K^{j-1}j!(K-j)!}\sum_{\substack{\boldsymbol{r}\in{[0,K]}^K:\\\ell(\boldsymbol{r})=n,\max(r_i)\leq j}} \prod_{i=1}^K\frac{\binom{j}{r_i}}{\binom{K}{r_i}}.
    \end{equation}
    \cref{pr:prod-ratio} allows us to bound the product above, and a loose bound for the number of vectors $\boldsymbol{r}$ with $\ell(\boldsymbol{r}) = n$ is $\binom{n+K-1}{K}$. Hence,
    \begin{equation}\label{eqNoSmallTraps}
    \begin{aligned}
        \mathbb{P}\parens*{\bigcup_{n=1}^{\floor{K^\alpha}}\mathcal{T}_n \neq\varnothing} &\leq \sum_{n=4}^{\floor{K^\alpha}}\sum_{j=2}^{n-2} \frac{(K-n+j-1)!(n-j+1)!}{K^{j-1}j!(K-j)!}\binom{n+K-1}{K}\left(\frac{j}{K}\right)^n \le  \cnst K^{-3+2\alpha},
        \end{aligned}
    \end{equation}
    where the last inequality is proved in \cref{pr:Kais1} in the Appendix.

    To bound the remaining term in \cref{eq:t-pne}, we make use of the fact that for a trap $\tau$ to be larger than $K^\alpha$, either: there must exist a row $r$ or a column $c$ containing at least $K^{\alpha/2}$ strategy profiles of $\tau$; or no such row/column exists and the trap spans more than $K^{\alpha/2}$ rows and columns.
    These events will be denoted as $A_1$ and $A_2$ respectively.
    Note that any strategy profile neighbouring a PNE cannot be part of a trap; hence, if $\boldsymbol{s}$ and $\boldsymbol{t}$ are such that $\boldsymbol{s}\sim_i \boldsymbol{t}$, $\boldsymbol{s}$ neighbours a PNE, and $\boldsymbol{t}$ is in a trap, then $Z_i^{\boldsymbol{s}} < Z_i^{\boldsymbol{t}}$.
    Under event $A_1$, we can find at least $K^{\alpha/2}$ such pairs of strategy profiles: $\boldsymbol{t}$ chosen from row $r$ and $\boldsymbol{s}$ chosen from any row containing a PNE.
    Hence, we are in the framework of \cref{le:P(C1)}, and by a union bound on rows and columns we have
    \begin{equation}
        \mathbb{P}\parens*{\braces*{\bigcup_{n=\floor{K^\alpha} + 1}^{K^2} \mathcal{T}_n \neq\varnothing} \cap J_K \cap A_1} \leq 2K^{2} \parens*{\frac{1}{2}}^{K^{\alpha/2}}.
    \end{equation}
    
    We turn our attention to event $A_2$.  By taking a union bound over all trap sizes larger than $K^\alpha$, and applying \cref{pr:comb}, we get
    \begin{multline}\label{eq:bb1}
        \mathbb{P}\parens*{\braces*{\bigcup_{n=\floor{K^\alpha} + 1}^{K^2} \mathcal{T}_n \neq\varnothing} \cap J_K \cap A_2}
        \leq \sum_{n=\floor{K^\alpha}}^{K^2} \sum_{\mathcal{V}: \card{\mathcal{V}} = n} \frac{1}{(K!)^K \prod_{i=1}^K \binom{K}{R_i(\mathcal{V})}}\prod_{i=1}^K c_i!(K - c_i)!\\
        \leq \sum_{n=\floor{K^\alpha}}^{K^2} \sum_{\mathcal{V}: \card{\mathcal{V}} = n} \frac{(\floor{K^{\frac{\alpha}{2}}}!(K-\floor{K^{\frac{\alpha}{2}}})!)^{\floor{n /\floor{K^{\alpha/2}}}}K!^{K-\floor{n /\floor{K^{\alpha/2}}}}}{(K!)^K \prod_{i=1}^K \binom{K}{R_i(\mathcal{V})}}.
    \end{multline}
    For any $\boldsymbol{r}\in{[0,K]}^K$ with $\ell(\boldsymbol{r}) = n$, there are $\prod_{i=1}^K \binom{K}{r_i}$ strategy profile sets $\mathcal{V}$ of size $n$ which satisfy $\boldsymbol{R}(\mathcal{V}) = \boldsymbol{r}$; moreover, there are $\binom{n+K-1}{K}$ such vectors. Hence the right hand side of \cref{eq:bb1} becomes
    \begin{equation}\label{eqNoCoexistenceTrapsPNE}
        \sum_{n=\floor{K^\alpha}}^{K^2} \binom{n+K-1}{K}\binom{K}{\floor{K^{\alpha/2}}}^{-\floor{n /\floor{K^{\alpha/2}}}} < \cnst K^{-\beta K^\alpha}
    \end{equation}
    where the last inequality  is proved in \cref{pr:Kais2} in the Appendix.
\end{proof}

\section{Appendix}
Here we prove the estimates for Equations \eqref{eqNoSmallTraps} and \eqref{eqNoCoexistenceTrapsPNE}. We use the following approximations 
$$e^{1/(12n+1)} < \frac{n!e^n}{\sqrt{2\pi}n^{n+1/2}} < e^{1/(12n)}\text{ and }e^{a+a^2/(2x)-a^3/(2x^2)}<(1+a/x)^{x+a}<e^{a+a^2/x}$$
We let $\phi(x)=x^x=e^{x\ln x}$ ($\phi(0)=1$), $A(a,x)=e^{a+a^2/x}$ and $B(a,x)=e^{a+a^2/(2x)-a^3/(2x^2)}$.
\begin{proposition}\label{pr:Kais1} For any $\alpha \in (0,1)$ there exists a $cnst$ that satisfies
$$\sum_{n=4}^{\floor{K^\alpha}}\sum_{j=2}^{n-2} \frac{(K-n+j-1)!(n-j+1)!}{K^{j-1}j!(K-j)!}\binom{n+K-1}{K}\left(\frac{j}{K}\right)^n\le \cnst K^{-3+2\alpha},$$
for all $K \in \N$.
\end{proposition}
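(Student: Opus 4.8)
The plan is to reduce the whole double sum to a single uniform per-term estimate. Concretely, I will show that every summand is at most $C K^{-3}$ for a constant $C$ independent of $n,j,K$, and then simply count: the index set $\{4\le n\le\floor{K^\alpha},\,2\le j\le n-2\}$ has at most $\tfrac12 K^{2\alpha}$ elements, so the sum is at most $\tfrac12 C K^{2\alpha}\cdot K^{-3}=C K^{-3+2\alpha}$, which is the asserted bound. To prepare the per-term estimate I would first expand $\binom{n+K-1}{K}=\frac{(n+K-1)!}{(n-1)!\,K!}$ and write the summand as $F(n,j)\cdot E(n,j)$, where $F(n,j):=\frac{(K-n+j-1)!\,(n+K-1)!}{(K-j)!\,K!}$ collects the four factorials whose arguments are close to $K$, and $E(n,j):=\frac{(n-j+1)!}{j!\,(n-1)!}\cdot\frac{j^{n}}{K^{n+j-1}}$ collects the genuinely small quantities.

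Next I would estimate $F(n,j)$ with the Stirling-type inequalities recorded at the head of this appendix. The signed sum of the four offsets from $K$ is $(-n+j-1)+(n-1)-(-j)-0=2j-2$, so the leading powers of $K$ collapse to $K^{2j-2}$; writing each ratio $(K+a)!/K!$ through $\phi(K+a)/\phi(K)=K^{a}(1+a/K)^{K+a}$ and bounding $(1+a/K)^{K+a}$ between $B(a,K)$ and $A(a,K)$ (using $A$ on the numerator offsets and $B$ on the denominator ones), all remaining corrections combine into a single exponential factor of order $\exp\parens{C\,n(n-j)/K}$. Multiplying by $E(n,j)$ then gives
\[
T(n,j)\ \le\ C\,\frac{(n-j+1)!\,j^{n}}{j!\,(n-1)!}\,K^{\,j-n-1}\,\exp\!\parens*{\tfrac{C\,n(n-j)}{K}},
\]
so the leading power of $K$ is $K^{\,j-n-1}$, maximised at $j=n-2$ where it equals $K^{-3}$.

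The crux is the uniform bound $K^{3}T(n,j)\le C$, which I would establish by taking logarithms and balancing two competing mechanisms. Writing $\ln\parens*{K^3 T(n,j)}\le -(n-2-j)\ln K+\ln\frac{(n-j+1)!\,j^{n}}{j!\,(n-1)!}+\frac{C\,n(n-j)}{K}$, there are two regimes. When $j$ is far from $n-2$, the polynomial decay $K^{-(n-2-j)}$ is decisive: since $j\le n\le K^\alpha$ forces $\ln j\le\alpha\ln K$ and $n(n-j)/K=O\parens{(n-j)K^{\alpha-1}}$, both the combinatorial growth and the exponential correction grow, per unit of $n-j$, strictly slower than the $\ln K$ supplied by each factor of $K^{-1}$, so the expression tends to $-\infty$. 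When $j$ is close to $n-2$ the power of $K$ gives nothing, but now the factorial ratio $\frac{(n-j+1)!}{(n-1)!}$ is itself super-exponentially small (of order $1/(n-1)!$ at the peak $j=n-2$), and a Stirling computation shows that in this regime $\frac{(n-j+1)!\,j^{n}}{j!\,(n-1)!}\le C$ uniformly. Combining the two regimes yields $K^{3}T(n,j)\le C$, and the counting argument finishes the proof. The main obstacle is precisely this two-regime balance: neither the polynomial-in-$K$ decay nor the factorial decay alone controls the summand across the whole index range, and the hypothesis $\alpha<1$ is what guarantees, in the off-peak regime, that $\ln j$ and $n/K$ stay below $\ln K$ so that the $K$-power wins.
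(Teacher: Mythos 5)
Your high-level plan coincides with the paper's: both proofs rest on a uniform per-term bound of order $K^{-3}$ multiplied by the $O(K^{2\alpha})$ count of pairs $(n,j)$, and your preparatory reductions (the offset count $2j-2$, the cancellation that keeps the Stirling corrections down to $\exp(Cn(n-j)/K)$, the identification of the peak at $j=n-2$) are correct. The gap is in the crux, namely your off-peak regime. You claim that, per unit of $n-j$, the combinatorial factor $Q(n,j)\coloneqq\frac{(n-j+1)!\,j^{n}}{j!\,(n-1)!}$ grows strictly slower than $\ln K$, citing $\ln j\le\alpha\ln K$. But one step away from the peak multiplies $Q$ by
\begin{equation*}
\frac{Q(n,j-1)}{Q(n,j)}=(n-j+2)\,j\,\parens*{1-\tfrac{1}{j}}^{n},
\end{equation*}
and you have not accounted for the factor $n-j+2$ coming from $(n-j+1)!$. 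For $n\asymp K^{\alpha}$ and $j\asymp n/2$ this ratio is $\asymp n^{2}e^{-2}\asymp K^{2\alpha}$, so the per-step growth is about $2\alpha\ln K$, which exceeds the single factor $\ln K$ of decay whenever $\alpha>1/2$. In that range the map $j\mapsto K^{-(n-2-j)}Q(n,j)$ is genuinely non-monotone, so ``the expression tends to $-\infty$'' does not follow by telescoping away from the peak; as written, your dichotomy only covers $\alpha<1/2$.

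The per-term bound is nevertheless true, because at the peak $Q(n,n-2)$ is (for large $n$) not merely bounded but of size $e^{-\Theta(n\ln n)}$, and this head start absorbs the accumulated increase, of order $(2\alpha-1)n\ln K$, picked up in the middle range of $j$. Making this rigorous requires a global estimate rather than regime-by-regime monotonicity. One clean repair: Stirling gives $Q(n,j)\le n^{\,n-2j+2}e^{Cn}$, while $n\le K^{\alpha}$ gives $K^{-(n-2-j)}\le n^{-(n-2-j)/\alpha}$; writing $n-2j+2=(n-2-j)-j+4$ and using $(n-2-j)+j=n-2$, the combined exponent of $n$ is at most $-\min\parens*{\tfrac{1}{\alpha}-1,\,1}(n-2)+4$, whence $K^{3}T(n,j)\le e^{C'n}\,n^{-c_{\alpha}(n-2)+4}\le C''$ uniformly. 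Note that here your two mechanisms act simultaneously at every $(n,j)$ --- their convex combination is what wins --- rather than in separate regimes. The paper instead runs the monotonicity in the other variable: for fixed $j$, the logarithmic derivative in $n$ of the $\phi$-approximation $G(K,n,j)$ is at most $\ln\frac{K^{\alpha}(K+K^{\alpha})}{K(K-K^{\alpha})}<0$ for every $\alpha<1$ (no breakdown at $\alpha=1/2$, since the relevant per-step factor is $\asymp n/K$, not $\asymp n^2/K$), which reduces everything to the boundary family $n=j+2$, then treated by the cases $j=2$, $j=3$, $j\ge4$. Either repair closes the gap.
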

\begin{proof}
Let
$\begin{displaystyle}G(K,n,j) = \frac{\phi(K-n-1+j)\phi(n+1-j)}{\phi(j)\phi(K-j)K^{j-1}}\frac{\phi(K+n-1)}{\phi(K)\phi(n-1)}\left(\frac{j}{K}\right)^n\end{displaystyle}$.
An examination of the behaviour of the function $G$ leads to 
$G(K,n,j) \le G(K,j+2,j) \le \cnst K^{-3}$, for $K$ large enough. For the second inequality, it is necessary to distinguish the cases $j=2$, $j=3$ and $j\ge4$.

A bound for the square root term (that which arises from Stirling's approximation) can be obtained in a similar fashion, from which we immediately get the desired result.
\end{proof}

\begin{proposition}\label{pr:Kais2} For any $\alpha \in (0,1)$ and for any $\beta<\alpha/2$, we have that for any $K$ large enough (depending on the choice of $\alpha$ and $\beta$), 
\begin{equation*}
    \sum_{n=\floor{K^\alpha}}^{K^2} \binom{n+K-1}{K}\binom{K}{\floor{K^{\alpha/2}}}^{-\floor{n /\floor{K^{\alpha/2}}}} < K^{-\beta K^\alpha}.
\end{equation*}
\end{proposition}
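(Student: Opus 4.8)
The plan is to collapse the whole expression into a single geometrically decaying sum whose very first term is already of the target size. Write $m = \floor{K^{\alpha/2}}$ and $N_0 = \floor{K^\alpha}$ for the summation range, and keep in mind the two scales that drive everything: $m/N_0 \to 0$ and $m/K \to 0$ as $K\to\infty$. First I would tame the second factor. Since $\floor{n/m}\ge n/m - 1$ and $\binom{K}{m}\ge 1$, we have $\binom{K}{m}^{-\floor{n/m}} \le \binom{K}{m}\,\binom{K}{m}^{-n/m}$, and combining this with the elementary bound $\binom{K}{m}\ge (K/m)^m$ gives $\binom{K}{m}^{-\floor{n/m}} \le \binom{K}{m}\,(m/K)^n$. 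Hence each summand is at most $\binom{K}{m}\,h(n)$, where $h(n) = \binom{n+K-1}{K}(m/K)^n$ and the prefactor $\binom{K}{m}$ no longer depends on $n$.

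The purpose of this rewriting is that $h$ decays geometrically across the entire range. Indeed $h(n+1)/h(n) = \tfrac{n+K}{n}\cdot\tfrac{m}{K} = \tfrac{m}{n}+\tfrac{m}{K}$, and for every $n\ge N_0$ this is at most $\tfrac{m}{N_0}+\tfrac{m}{K}$, which is $<\tfrac12$ once $K$ is large by the two scale estimates above. Therefore $\sum_{n=N_0}^{K^2} h(n)\le 2h(N_0)$, so the whole quantity is bounded by $2\binom{K}{m}h(N_0)$; the number of terms ($\le K^2$) is harmless.

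It then remains to estimate $h(N_0)$ and the prefactor on the logarithmic scale. Using $\binom{N_0+K-1}{K} = \binom{N_0+K-1}{N_0-1}\le \bigl(e(N_0+K-1)/(N_0-1)\bigr)^{N_0-1}$ together with $N_0\ll K$, one finds $\log\binom{N_0+K-1}{K}\approx (1-\alpha)K^\alpha\log K$, while $N_0\log(m/K)\approx -(1-\tfrac{\alpha}{2})K^\alpha\log K$, so that $\log h(N_0)\approx -\tfrac{\alpha}{2}K^\alpha\log K$. The prefactor contributes only $\log\binom{K}{m}\le m\log(eK/m) = O(K^{\alpha/2}\log K)$, which is of strictly smaller order. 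Thus $\log\bigl(2\binom{K}{m}h(N_0)\bigr) = \bigl(-\tfrac{\alpha}{2}K^\alpha + O(K^{\alpha/2})\bigr)\log K$, and since $\beta<\alpha/2$ the growing gap $(\tfrac{\alpha}{2}-\beta)K^\alpha\log K$ absorbs the correction, yielding the claimed bound $<K^{-\beta K^\alpha}$ for all large $K$.

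The main obstacle is precisely the bookkeeping in this last step: one must verify that every correction is genuinely of lower order than the leading $-\tfrac{\alpha}{2}K^\alpha\log K$ — namely the $+1$ from the floor in $\floor{n/m}$, the $O(K^{\alpha/2})$ slack between $N_0=\floor{K^\alpha}$ and $m^2$, the Stirling error terms (for which the $A$, $B$, $\phi$ apparatus of the appendix is the convenient device), and the $\binom{K}{m}$ prefactor, all of which are $O(K^{\alpha/2}\log K)$ or smaller. The strict inequality $\beta<\alpha/2$ is exactly what furnishes the slack needed to swallow all of them at once.
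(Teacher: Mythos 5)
Your proposal is correct, and at the top level it follows the same strategy as the paper's proof: both handle the floor in the exponent via $\floor{n/m}\ge n/m-1$, extracting a single $n$-independent prefactor $\binom{K}{m}$; both then dominate the entire sum by (essentially) its first term $n=\floor{K^\alpha}$; and both finish by comparing exponents at the scale $K^\alpha\log K$, using the strict inequality $\beta<\alpha/2$ to absorb all lower-order corrections. Where you genuinely differ is in the machinery. The paper lower-bounds $\psi(K)=\binom{K}{N}^{1/N}$ by $(K-N)/(2N)$ via Stirling-type estimates, proves that $G(K,n)=\frac{\phi(K+n-1)}{\phi(K)\phi(n-1)}\psi(K)^{-n}$ is decreasing in $n$ by differentiating its logarithm, and then pays the (harmless) factor of at most $K^2$ for the number of terms; every estimate runs through the $\phi$, $A$, $B$ Stirling apparatus of the appendix. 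You instead use the elementary inequality $\binom{K}{m}\ge(K/m)^m$ to convert the second factor into $(m/K)^n$, and then exploit the exact identity $h(n+1)/h(n)=\frac{n+K}{n}\cdot\frac{m}{K}=\frac{m}{n}+\frac{m}{K}$ (the binomial ratio telescopes) to obtain a geometric series with ratio below $\tfrac12$, so the sum is at most $2h(N_0)$ with no Stirling input and no calculus; the final bookkeeping then needs only $\binom{a}{b}\le(ea/b)^b$. This is cleaner and more self-contained than the paper's route. One small correction to your final paragraph: the error coming from the constant $e$ in $\binom{N_0+K-1}{N_0-1}\le\bigl(e(N_0+K-1)/(N_0-1)\bigr)^{N_0-1}$ contributes $(N_0-1)\cdot 1=O(K^\alpha)$ to the log-scale budget, which is \emph{not} $O(K^{\alpha/2}\log K)$; this does no damage, since $O(K^\alpha)$ is still $o(K^\alpha\log K)$ and is swallowed by the gap $(\alpha/2-\beta)K^\alpha\log K$ exactly as you argue, but the blanket claim that every correction is $O(K^{\alpha/2}\log K)$ is not literally true.
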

\begin{proof}
For $K$ large, $\begin{displaystyle}\psi(K) = \left(\frac{K!}{N!(K-N)!}\right)^{1/N} > (K-N)/(2N)\end{displaystyle}$, where for ease of presentation, we write $N$ for $\floor{K^{\alpha/2}}$. Then,
$$G(K,n)=\frac{\phi(K+n-1)}{\phi(K)\phi(n-1)}\psi(K)^{-n}$$
is a decreasing function of $n$ and $G(K,n) \le  G(K,K^\alpha-1) < \cnst e^{3K^\alpha/2+K^{-1+2\alpha}-1/2}K^{-(\alpha/2)K^{\alpha}-1+3\alpha/2}$.
Similarly,
$$\sqrt{\frac{K}{N(K-N)}}\le \cnst K^{-\alpha/4}\text{ and }
\frac{\phi(K)}{\phi(N)\phi(K-N)} < K^{(1-\alpha/2)K^{\alpha/2}}e^{K^{\alpha/2}-\frac12K^{-1+\alpha}-\frac12K^{-2+3\alpha/2}}.$$
The result immediately follows.
\end{proof}


\bibliography{bibNEpercolation}
\bibliographystyle{apalike}

\end{document}